 \definecolor{BLACK}{gray}{0}
 \definecolor{WHITE}{gray}{1}
 \definecolor{RED}{rgb}{1,0,0}
 \definecolor{GREEN}{rgb}{0,1,0}
 \definecolor{BLUE}{rgb}{0,0,1}
 \definecolor{CYAN}{cmyk}{1,0,0,0}
 \definecolor{MAGENTA}{cmyk}{0,1,0,0}
 \definecolor{YELLOW}{cmyk}{0,0,1,0}
  \theoremstyle{plain}
  \newtheorem{lem}{\protect\lemmaname}
  \newtheorem{theorem}{Theorem}
\providecommand{\lemmaname}{Lemma}
  \providecommand{\lemmaname}{Lemma}
\begin{document}

\title{Unfrustration Condition and Degeneracy of Qudits on Trees}

\author{Matthew Coudron}

\email{mcoudron@mit.edu}

\selectlanguage{english}%

\affiliation{Department of Electrical Engineering and Computer Science, Massachusetts
Institute of Technology, Cambridge MA 02139}

\author{Ramis Movassagh}

\email{ramis.mov@gmail.com}

\selectlanguage{english}%

\affiliation{Department of Mathematics, Northeastern University,
Boston MA  02115}

\date{\today}
\begin{abstract}
We generalize the previous results of \cite{RS} by proving unfrustration condition and degeneracy of the ground states of qudits ($d-$dimensional spins) on a $k-$child tree with generic local interactions. We find that the dimension of the ground space grows doubly exponentially in the region where $rk\leq\frac{d^2}{4}$ for $k>1$.  Further, we extend the results in \cite{RS} by proving that there are no zero energy ground states when $r>\frac{d^2}{4}$ for $k=1$ implying that the effective Hamiltonian is invertible.  
\end{abstract}
\maketitle
\section{Frustration Free Spin Systems }
The interactions in quantum many-body systems are usually well approximated to be local. We say the ground state of the Hamiltonian is {\em unfrustrated} or Frustration Free (FF) when it is also a common ground state of all of the local terms.

 There are many models such as the Heisenberg ferromagnetic chain, AKLT, parent Hamiltonians of MPS that are FF \cite{Koma95, AKLT87, fannes, MPS}. Besides such models and the mathematical convenience of working with FF systems, what is the significance of FF systems? In particular, do FF systems describe systems that can be realized in nature? Some answers can be given.
 
 It has been proved by Hastings \cite{Hastings06} that gapped Hamiltonians can be approximated by FF Hamiltonians if one allows for the range of interaction to be $\mathcal{O}\mathcal{\left(\log N\right)}$.  Further, it is believed that any type of gapped ground state is adequately described by a FF model \cite{fannes}.  A nice feature of FF systems is that the ground state is stable against variation of the Hamiltonian against perturbations  $H\left(g\right)=\sum_{k}g_{k}H_{k,k+1},  g_{k}>0$ as the kernels of the local terms remain invariant \cite{krausMarkov}. In complexity theory, the classical SAT problem was generalized by Bravyi \cite{bravyi} to the so called quantum SAT or qSAT. The statement of the qSAT problem is: Given a collection of $m-$local projectors on $n$  qubits, is there a state $\psi\rangle$ that is annihilated by all the projectors? Namely, is the system FF? Lastly, an important physical motivation was given by Verstraete et al \cite{VerstraeteNature} where they showed that  ground states of FF Hamiltonians can be prepared by dissipation.
 
 Previously, Ref. \cite{RS} focused on a chain of $d-$dimensional spins with  `generic'  local Hamiltonian $H=\sum_j H_{j,j+1}$. The local terms  $H_{j,j+1}$ were chosen randomly with a fixed rank $r$. Three  regimes were identified: (i) frustrated chains, $r >d^2/4$, (ii) FF chains, $d\le r\le d^2/4$, and (iii) FF chains with product ground states, $r<d$.
It was conjectured that the ground states of generic FF chains in the regime $d\le r\le d^2/4$, with probability one, are {\it all} highly entangled in a Schmidt rank sense. This regime however requires local dimension $d\ge 4$. 

In this paper we extend the previous work to the case where the spins are on a tree. Moreover, we improve on the previous results \cite{RS}  on a line by proving that  there are no zero energy ground states when $r>d^2/4$ for qudit chains. We leave the problem of entanglement of the ground states open, though we believe the ideas presented herein (e.g., Lemma \ref{columnranklemma-1}) may ultimately, combined with techniques in \cite{RamisThesis}, become helpful in proving lower bounds on the Schmidt rank.

\section{Generic Interaction }

Consider $k$-child trees of $d$-dimensional quantum particles (qudits)
with nearest neighbor interactions- at every vertex, $k$ edges fan
out to connect to $k$ qudits as shown in Fig. \ref{fig:The-tree-structure}. The Hamiltonian of the system, 
\begin{equation}
H=\sum_{\langle m,n\rangle}^{N-1}H_{m,n}\label{eq:hamiltonian}
\end{equation}
 is $2-$local; each $H_{m,n}$, shown as edges in Fig. \ref{fig:The-tree-structure}, acts non-trivially
only on two neighboring qudits . Our goal is to find the necessary and sufficient conditions for the quantum system, with generic local interactions, to be unfrustrated. Namely, the conditions under which ground states of the Hamiltonian are also common ground states of all $H_{m,n}$.

By {\it generic} we mean randomly sampled from any measure that is absolutely continuous with respect to the Haar measure. In this context, our notion of generic means that no particular local projector has a positive probability of being sampled. 

As discussed previously \cite{RS, RamisThesis}, the
question of existence of a common ground state of all the local terms
is equivalent to asking the same question for an {\it effective} Hamiltonian whose
interaction terms are,
\begin{eqnarray}
H'_{m,n} & = & \mathbb{I}\otimes \Pi_{m,n}\otimes\mathbb{I},\label{eq:projham}
\end{eqnarray}
 with $\Pi_{m,n}$ projecting onto the excited
states of each original interaction term $H_{m,n}$. When this modified
system is unfrustrated, its ground state energy is zero (all the terms
are positive semi-definite). The unfrustrated ground state belongs
to the intersection of the ground state subspaces of each $H_{m,n}$
and is annihilated by all the projector terms.

We choose to focus on a class of Hamiltonians for which each 
\begin{eqnarray}
\Pi_{m,n}=\sum_{p=1}^{r}|v_{m,n}^{p}\rangle\langle v_{m,n}^{p}|\label{eq:definev}
\end{eqnarray}
 is a rank-$r$ projector acting on a $d^{2}$-dimensional Hilbert
space of two qudits, chosen by picking an orthonormal set of $r$
random vectors without translational invariance. %

The set of $r$ constraints of each local term can be seen as a $d^2 \times r$ matrix whose columns are the orthonormal vectors $|v_{m,n}^{p}\rangle$. This matrix is represented by a point on the Steifel manifold \cite{alan}.

\section{Recursive Investigation of Unfrustrated Ground States}

We now find conditions governing the existence of zero energy ground
states (from now on, called solutions in short). We do so by counting
the number of solutions possible for a subset of the tree, and then
adding another site and imposing the constraints given by the Hamiltonian.

\begin{figure}
\begin{centering}
\includegraphics[scale=0.5]{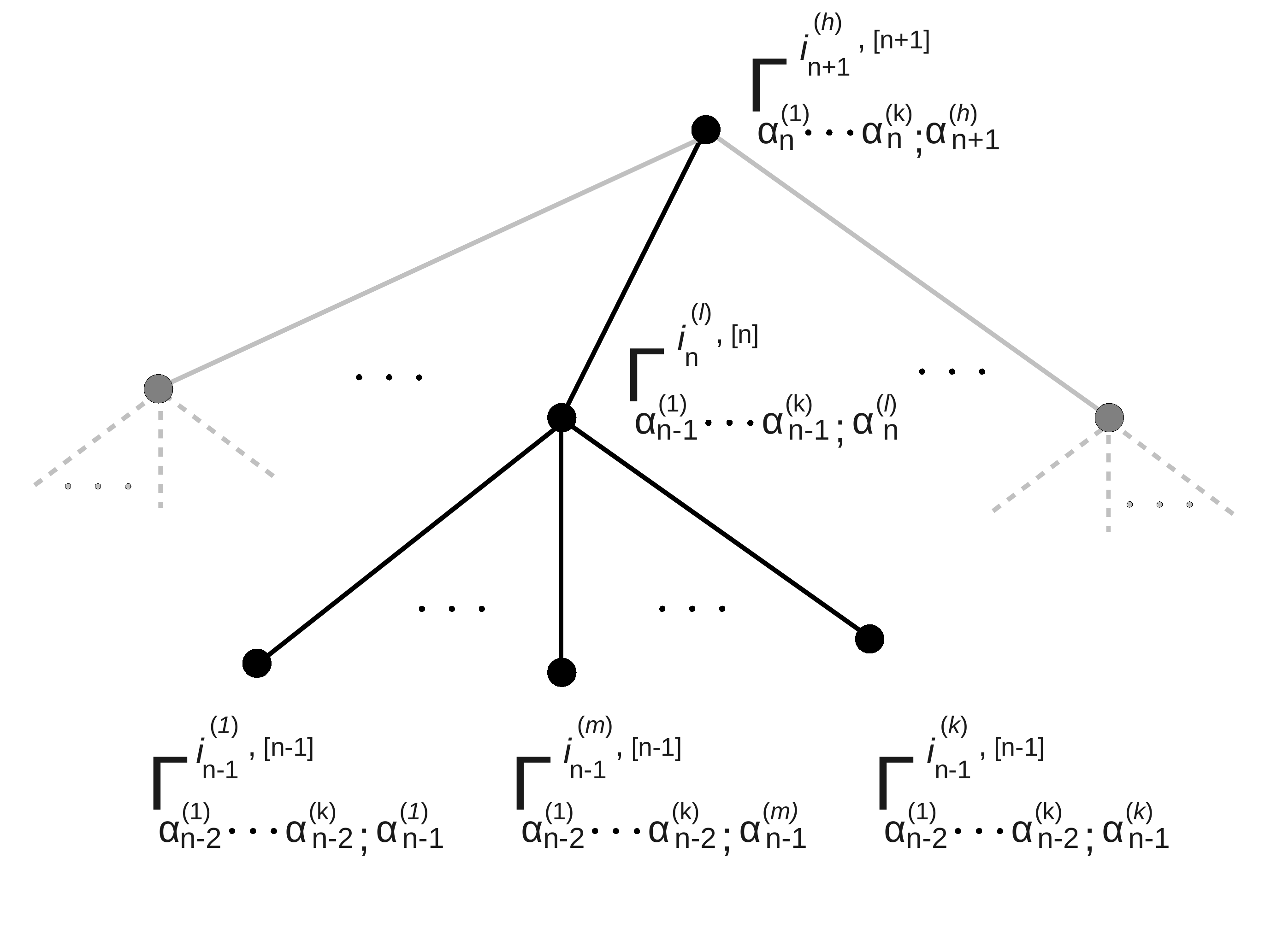}
\par\end{centering}

\caption{\label{fig:The-tree-structure}The tree structure with relevant indexing}
\end{figure}

Below we use the extension of matrix product states (MPS) representation \cite{MPS,vidal1,vidal2,Daniel} to describe the state of the qudits on the tree (also known as tensor product states) \cite{Daniel, vidalTree}. The structure of every tensor (as in MPS) at a given site is $\Gamma_{\alpha_{n-1}^{(1)},\alpha_{n-1}^{(2)},\cdots,\alpha_{n-1}^{(k)};\alpha_{n}^{(l)}}^{i_{n}^{(l)},[n]}$,
where the subscript $\alpha_{n}^{(l)}$ indicates the connection with
the parent and  the subscripts $\alpha_{n-1}^{(1)},\alpha_{n-1}^{(2)},\cdots,\alpha_{n-1}^{(k)}$
indicate connections with the $k$ children of that parent. We denote the membership among the $k$-edges by putting the corresponding label (e.g., $1\leq l\leq k$)  in parenthesis as can be seen in Fig. \ref{fig:The-tree-structure}. The value
of $n$ in what follows {\it increases} as we work our way {\it up} from the leaves to the
root.

 For a given $n$ we will focus on a subtree rooted at some node
$h$ on the $(n+1)^{st}$ level of the tree; a distance $(n+1)$ from
the leaves. We will solve for $|\psi_{\alpha_{n+1}}^{(h)}\rangle$
which represent all of the linearly independent unfrustrated solutions
on the entire subtree rooted at $h$. We will assume inductively that
we have enumerated all of the linearly independent solutions $|\psi_{\alpha_{n}}^{(j)}\rangle$
on subtrees at the $n^{th}$ level for $1\le j\le k$ (see Fig.
 \ref{fig:The-tree-structure}).

By definition the values $\Gamma_{\alpha_{n}^{(1)},\alpha_{n}^{(2)},\cdots,\alpha_{n}^{(k)};\alpha_{n+1}^{(h)}}^{i_{n+1}^{(h)},[n]}$
give the coefficients of the expansion of $|\psi_{\alpha_{n+1}}^{(h)}\rangle$

\begin{eqnarray}
|\psi_{\alpha_{n+1}}^{(h)}\rangle & = & \sum_{i_{n+1}^{(h)}}\sum_{\alpha_{n}^{(1)},\alpha_{n}^{(2)},\cdots,\alpha_{n}^{(k)}}\Gamma_{\alpha_{n}^{(1)},\alpha_{n}^{(2)},\cdots,\alpha_{n}^{(k)};\alpha_{n+1}^{(h)}}^{i_{n+1}^{(h)},[n+1]}|i_{n+1}^{(h)}\rangle|\psi_{\alpha_{n}^{(1)}}^{(1)}\rangle\cdots|\psi_{\alpha_{n}^{(k)}}^{(k)}\rangle,\label{gammaexp1}
\end{eqnarray}
 in terms of the physical index $i_{n+1}^{\left(h\right)}$ and independent
bases for each subtree.

In order to solve for the unfrustration condition and the degeneracy
of the ground states on the subtree rooted at $h$ we must apply the
constraints associated to the projectors between $h$ and each of
its children. For a given child $l$ of $h$ it follows from the unfrustration
condition that $|\psi_{\alpha_{n+1}}^{(h)}\rangle$ must be annihilated
by (see the corresponding edge in Figure \eqref{fig:The-tree-structure})

\[
H'_{n^{(l)},(n+1)^{(h)}}=\mathbb{I}\otimes \Pi_{n^{(l)},(n+1)^{(h)}}\otimes\mathbb{I},
\]
 where

\[
\Pi_{n^{(l)},(n+1)^{(h)}}=\sum_{p=1}^{r}|v_{n^{(l)},(n+1)^{(h)}}^{p}\rangle\langle v_{n^{(l)},(n+1)^{(h)}}^{p}|.
\]
 Here $|v_{n^{(l)},(n+1)^{(h)}}^{p}\rangle$ is a set of random orthonormal
vectors drawn from the $d^{2}$ dimensional space spanned by $|i_{n}^{(l)},i_{n+1}^{(h)}\rangle$.
Clearly, the unfrustration condition implies that  $|\psi_{\alpha_{n+1}}^{(h)}\rangle$
is annihilated by each one of the rank-$1$ projectors, 

\[
H''^{\;p}_{n^{(l)},(n+1)^{(h)}}\equiv\mathbb{I}\otimes|v_{n^{(l)},(n+1)^{(h)}}^{p}\rangle\langle v_{n^{(l)},(n+1)^{(h)}}^{p}|\otimes\mathbb{I}
\]
 for every $p$. Using

\begin{eqnarray}
|\psi_{\alpha_{n}}^{(l)}\rangle & = & \sum_{i_{n}^{(l)},\alpha_{n-1}^{(1)},\alpha_{n-1}^{(2)},\cdots,\alpha_{n-1}^{(k)}}\Gamma_{\alpha_{n-1}^{(1)},\alpha_{n-1}^{(2)},\cdots,\alpha_{n-1}^{(k)};\alpha_{n}^{(l)}}^{i_{n}^{(l)},[n]}|i_{n}^{(l)}\rangle|\psi_{\alpha_{n-1}^{(1)}}^{(1)}\rangle\cdots|\psi_{\alpha_{n-1}^{(k)}}^{(k)}\rangle\label{gammaexp2}
\end{eqnarray}
 and combining it with Eq. \eqref{gammaexp1} we get the expression

\begin{eqnarray}
|\psi_{\alpha_{n+1}}^{(h)}\rangle & = & \sum_{i_{n}^{(l)},i_{n+1}^{(h)}}\sum_{\alpha_{n}^{(1\cdots k)},\alpha_{n-1}^{(1\cdots k)}}\Gamma_{\alpha_{n-1}^{(1\cdots k)};\alpha_{n}^{(l)}}^{i_{n}^{(l)},[n]}\Gamma_{\alpha_{n}^{(1\cdots k)};\alpha_{n+1}^{(h)}}^{i_{n+1}^{(h)},[n+1]}|i_{n}^{(l)},i_{n+1}^{(h)}\rangle|\psi_{\alpha_{n}^{(1)}}^{(1)}\rangle\cdots\widehat{|\psi_{\alpha_{n}^{(l)}}^{(l)}\rangle}\cdots|\psi_{\alpha_{n}^{(k)}}^{(k)}\rangle|\psi_{\alpha_{n-1}^{(1)}}^{(1)}\rangle\cdots|\psi_{\alpha_{n-1}^{(k)}}^{(k)}\rangle,\label{gammaexp3}
\end{eqnarray}
 where we denote $\alpha_{n-1}^{(1)},\alpha_{n-1}^{(2)},\cdots,\alpha_{n-1}^{(k)}$ by $\alpha_{n-1}^{(1\cdots k)}$ and a missing quantity by  an  $\;\;\widehat{ }\;\;$  over that quantity.

We now apply the projector and consider its kernel

\begin{eqnarray}
H''^{\;p}_{n^{(l)},(n+1)^{(h)}}|\psi_{\alpha_{n+1}}^{(h)}\rangle & = & \left\{ \mathbb{I}\otimes|v_{n^{(l)},(n+1)^{(h)}}^{p}\rangle\langle v_{n^{(l)},(n+1)^{(h)}}^{p}|\otimes\mathbb{I}\right\} |\psi_{\alpha_{n+1}}^{(h)}\rangle\\
 & = & \sum_{i_{n}^{(l)},i_{n+1}^{(h)}}\sum_{\alpha_{n}^{(1\cdots k)},\alpha_{n-1}^{(1\cdots k)}}\Gamma_{\alpha_{n-1}^{(1\cdots k)};\alpha_{n}^{(l)}}^{i_{n}^{(l)},[n]}\Gamma_{\alpha_{n}^{(1\cdots k)};\alpha_{n+1}^{(h)}}^{i_{n+1}^{(h)},[n+1]}|v_{n^{(l)},(n+1)^{(h)}}^{p}\rangle\langle v_{n^{(l)},(n+1)^{(h)}}^{p}|i_{n}^{(l)},i_{n+1}^{(h)}\rangle\\
 & \otimes & |\psi_{\alpha_{n}^{(1)}}^{(1)}\rangle\cdots\widehat{|\psi_{\alpha_{n}^{(l)}}^{(l)}\rangle}\cdots|\psi_{\alpha_{n}^{(k)}}^{(k)}\rangle|\psi_{\alpha_{n-1}^{(1)}}^{(1)}\rangle\cdots|\psi_{\alpha_{n-1}^{(k)}}^{(k)}\rangle\\
 & = & 0 .
\end{eqnarray}
The set of vectors $|v_{n^{(l)},(n+1)^{(h)}}^{p}\rangle|\psi_{\alpha_{n}^{(1)}}^{(1)}\rangle\cdots\widehat{|\psi_{\alpha_{n}^{(l)}}^{(l)}\rangle} \cdots |\psi_{\alpha_{n}^{(k)}}^{(k)}\rangle|\psi_{\alpha_{n-1}^{(1)}}^{(1)}\rangle\cdots|\psi_{\alpha_{n-1}^{(k)}}^{(k)}\rangle$
with respect to variables $\alpha_{n-1}^{(1\cdots k)}$ and $\alpha_{n}^{(1\cdots k)}$
are linearly independent. This follows from the fact that $\forall j$,
$|\psi_{\alpha_{n}^{(j)}}^{(j)}\rangle$ are linearly independent
for different values of $\alpha_{n}^{(j)}$, and that $|\psi_{\alpha_{n}^{(j)}}^{(j)}\rangle$
describe states on completely disjoint subtrees for different values
of $j$. Using this linear independence we see that the equation above
is true if and only if (iff)

\begin{align}
& \langle v_{n^{(l)},(n+1)^{(h)}}^{p}|i_{n}^{(l)},i_{n+1}^{(h)}\rangle\Gamma_{\alpha_{n-1}^{(1\cdots k)};\alpha_{n}^{(l)}}^{i_{n}^{(l)},[n]}\Gamma_{\alpha_{n}^{(1\cdots k)};\alpha_{n+1}^{(h)}}^{i_{n+1}^{(h)},[n+1]}=0\nonumber \label{constraint1}\\
\forall & \alpha_{n-1}^{(1)},\cdots,\alpha_{n-1}^{(k)},\quad\text{ and }\nonumber \\
\forall & \alpha_{n}^{(1)},\cdots,\widehat{\alpha_{n}^{(l)}},\cdots,\alpha_{n}^{(k)}.
\end{align}

 Note that in Eq. \eqref{constraint1} the repeated indices $i_{n}^{(l)}$,
$i_{n+1}^{(h)}$ and $\alpha_{n}^{(l)}$ are summed over. This shows
that the unfrustration condition holds iff Eq. \eqref{constraint1}
holds for all $1\leq p\leq r$, and $1\leq l\leq k$.

These constraints may be rewritten as

\begin{eqnarray}
\sum_{i_{n+1}^{(h)}}\sum_{\alpha_{n}^{(1\cdots k)}}C_{p,l,\alpha_{n-1}^{(1\cdots k)},\beta_{n}^{(1)},\cdots,\widehat{\beta_{n}^{(l)}},\cdots,\beta_{n}^{(k)};i_{n+1}^{(h)},\alpha_{n}^{(1\cdots k)}}\Gamma_{\alpha_{n}^{(1\cdots k)};\alpha_{n+1}^{(h)}}^{i_{n+1}^{(h)},[n+1]} & = & 0,\;\;\;  \forall p,l \label{eq:LinearSys}
\end{eqnarray}
where,

\begin{equation}
C_{p,l,\alpha_{n-1}^{(1\cdots k)},\beta_{n}^{(1)},\cdots,\widehat{\beta_{n}^{(l)}},\cdots,\beta_{n}^{(k)};i_{n+1}^{(h)},\alpha_{n}^{(1\cdots k)}}\equiv\prod_{j=1,\cdots,\hat{l,}\cdots,k}\delta\left(\beta_{n}^{j}=\alpha_{n}^{j}\right)\left(\sum_{i_{n}^{(l)}}\langle v_{n^{(l)},(n+1)^{(h)}}^{p}|i_{n}^{(l)},i_{n+1}^{(h)}\rangle\Gamma_{\alpha_{n-1}^{(1\cdots k)};\alpha_{n}^{(l)}}^{i_{n}^{(l)},[n]}\right)\label{eq:Cmatrix}
\end{equation}
In Eq. \eqref{eq:Cmatrix} the dummy variables $\beta_{n}^{(1)},\cdots,\widehat{\beta_{n}^{(l)}},\cdots,\beta_{n}^{(k)}$
have exactly the same ranges of value as $\alpha_{n}^{(1)},\cdots,\widehat{\alpha_{n}^{(l)}},\cdots,\alpha_{n}^{(k)}$
but take values independently. 

Comment: We reserve the notation $\alpha$'s for when the constraints on all the other edges have been satisfied too. 
The delta notation is to emphasize that the constraints must hold
for any choice of $\alpha_{n}^{(1)},\cdots,\widehat{\alpha_{n}^{(l)}},\cdots,\alpha_{n}^{(k)}$,
i.e., for all other subtrees other than $l$. 

The constraint matrix $C_{p,l,\alpha_{n-1}^{(1\cdots k)},\beta_{n}^{(1)},\cdots,\widehat{\beta_{n}^{(l)}},\cdots,\beta_{n}^{(k)};i_{n+1}^{(h)},\alpha_{n}^{(1\cdots k)}}$, also denoted simply by $C$, 
has $dD_{n}^{k}$ columns since $1\le i_{n+1}^{\left(h\right)}\le d$
and $1\le\alpha_{n}^{(1\cdots k)}\le D_{n}^{k}$ . Further it has
$rkD_{n-1}^{k}D_{n}^{k-1}$ rows since $1\le p\le r$, $1\le l\le k$,
$1\le\alpha_{n-1}^{\left(1\cdots k\right)}\le D_{n-1}^{k}$ and $1\le\beta_{n}^{(1)},\cdots,\widehat{\beta_{n}^{(l)}},\cdots,\beta_{n}^{(k)}\le D_{n}^{k-1}$.

Now, if the matrix $C$ has full rank with probability
one and $dD_{n}^{k}>rkD_{n-1}^{k}D_{n}^{k-1}$
for suitable values of $r,k,d$, then the kernel of $C$ has dimension

\begin{align}
D_{n+1} & \equiv dD_{n}^{k}-rkD_{n-1}^{k}D_{n}^{k-1}\quad\mbox{with probability one.}\label{recursion}
\end{align}
 It follows that there are $D_{n+1}$ linearly independent solutions
$\Gamma_{\alpha_{n}^{(1\cdots k)};\alpha_{n+1}^{(h)}}^{i_{n+1}^{(h)},[n+1]}$,
and thus $D_{n+1}$ linearly independent solutions $|\psi_{\alpha_{n+1}}^{(h)}\rangle$
on the subtree rooted at $h$. In the appendix we prove that $C$ is indeed full rank. 

Furthermore, by the same token, if we have that $dD_{n}^{k}\leq rkD_{n-1}^{k}D_{n}^{k-1}$
then there are no solutions $|\psi_{\alpha_{n+1}}^{(h)}\rangle$ on
the subtree rooted at $h$ implying that the Hamiltonian is \textit{frustrated}.
We proceed to analyze the recursion Eq. \eqref{recursion}, determine the
criteria for $r,k$ and $d$ that assure the existence of unfrustrated ground
states, and investigate the asymptotic growth of the number of solutions.

\section{\label{sec:Recursion-Analysis}Recursion Analysis}

Consider the recursion in Eq. (\ref{recursion})
\begin{align}
 & D_{n+1}\equiv dD_{n}^{k}-rkD_{n-1}^{k}D_{n}^{k-1}\label{recursion2}
\end{align}
 with the initial conditions $D_{0}=1$, $D_{1}=d$. Recall that we
start at the leaves of the tree, where each unrestricted qudit on
a leaf lives in a $d$-dimensional Hilbert space.
The value $D_{0}=1$ can be viewed as a place-holder in the recursion
and represents a formal $1-$dimensional space preceding the leaves.

Now suppose the solutions have the form,

\begin{eqnarray*}
D_{n} & = & \gamma_{n}D_{n-1}^{k}
\end{eqnarray*}
 for some $\gamma_{n}\in\mathbb{R}$. It follows from the recursion
\eqref{recursion2} that

\begin{eqnarray}
D_{n+1}=dD_{n}^{k}-rkD_{n-1}^{k}D_{n}^{k-1} & = & \left(d-\frac{rk}{\gamma_{n}}\right)D_{n}^{k}
\end{eqnarray}

Thus, if we define $\gamma_{n}$ by the recursion

\begin{eqnarray}
\gamma_{n+1} & \equiv & \left(d-\frac{rk}{\gamma_{n}}\right)\label{gammarecursion}\\
\gamma_{1} & \equiv & \frac{D_{1}}{D_{0}^{k}}=d;\nonumber 
\end{eqnarray}
 it follows that $D_{n}=\gamma_{n}D_{n-1}^{k}$ $\forall n$. Provided that we have founds positive solutions up to the $n^{th}$ step, i.e., non-negative $D_0,D_1,\cdots,D_{n-1}$, 
the value of $D_{n}$ becomes non-positive iff the value
of $\gamma_{n}$ becomes non-positive.

The following expressions are equivalent:
\begin{eqnarray}
\gamma_{n+1} =  \left(d-\frac{rk}{\gamma_{n}}\right)\geq\gamma_{n}\qquad \Longleftrightarrow \qquad\gamma_{n}^{2}-d\gamma_{n}+rk & \leq & 0. \label{eq:decreasecondition}
\end{eqnarray}
whose roots, taking the equality, are denoted by
\begin{eqnarray}
x_{-} \equiv & \frac{d-\sqrt{d^{2}-4rk}}{2} ,\qquad x_{+}\equiv\frac{d+\sqrt{d^{2}-4rk}}{2}\;.\nonumber 
\end{eqnarray}

Note that the inequality (\ref{eq:decreasecondition}) is satisfied exactly when $\gamma_{n}\in\left[x_{-},x_{+}\right]$. 
In the above computation we are assuming that $\gamma_{n}$
is positive since a non-positive value of $\gamma_{n}$ indicates
that there are no unfrustrated solutions on the chain with $n$ (or
more) sites. When $rk>\frac{d^{2}}{4}$ these roots are not real and it follows
that $\gamma_{n}$ is a strictly decreasing sequence. We thus know
that $\gamma_{n}$ must eventually become non-positive, or it must
converge to a positive number. However, it is easy to see that if
$\gamma_{n}$ converges to some positive number $\gamma^*$ then $\gamma^*$
must be a fixed point of (\ref{eq:decreasecondition}) satisfying $\gamma^{*2}-d\gamma^*+rk=0$, but this is impossible since
the roots are not real. It follows that, in the case $rk>\frac{d^{2}}{4}$
there exists an $N$ such that, for all $n\geq N$ there are no unfrustrated
solutions on the $n$ site chain (with probability $1$).

On the other hand, if $rk\leq\frac{d^{2}}{4}$ we note that if $\gamma_{n}\geq x_{+}$
we have that

\[
\gamma_{n+1}=\left(d-\frac{rk}{\gamma_{n}}\right)\geq\left(d-\frac{rk}{x_{+}}\right)=x_{+}.
\]

Since $\gamma_{0}=d\geq x_{+}$, using Eq. \eqref{eq:decreasecondition},
 $\gamma_{n}$ is a decreasing sequence which is bounded below
by $x_{+}$. Therefore,  $\gamma_{n}$ must converge to some
$\gamma\geq x_{+}$, which implies that its limit $\gamma$
must be a fixed point of $\gamma^{*2}-d\gamma^*+rk=0$, hence
\begin{align}
\lim_{n\to\infty}\gamma_{n}=x_{+}\label{gammalimit}\\
\nonumber 
\end{align}

It follows that, for $rk\leq\frac{d^{2}}{4}$, $D_{n}\geq G_{n}$  where $G_{n}$ is the solution to the recursion $G_{n}=x_{+}G_{n-1}^{k}$
with $G_{1}=d$ and
\begin{align}
D_{n}\geq G_{n} & \equiv x_{+}^{s_{n-1}}d^{k^{n-1}},\label{lowerbound}
\end{align}
where $s_{n-1}\equiv\sum{_{l=0}^{n-2}k^l}=\frac{k^{n-1}-1}{k-1}$. Eq. (\ref{lowerbound}) for all $d\ge2\Rightarrow x_+\ge 1$  implies a growing number of solutions.

Furthermore, the recursion for $G_{n}$ and that
$D_{n}$ converge in the sense that the respective recursion
constants $\gamma_{n}$ and $x_{+}$ converge. In particular, Eq.
\eqref{lowerbound} shows that, in the regime $rk\leq\frac{d^{2}}{4}$
the dimension of the unfrustrated ground space grows doubly exponentially as long as $k>1$.

\section{Proof of Frustration for $k=1$}
We now prove the non-existence of unfrustrated ground states for the $n$-qudit Hamiltonian with generic local interactions on the line when $r>\frac{d^2}{4}$.  In \cite{RS} the unfrustration condition was proved; however, it was only conjectured that the kernel would be empty with probability one when $r>\frac{d^2}{4}$. Naturally, the result below holds for sufficiently large $n$ since when $n$ is small the Hamiltonian may have zero eigenvalues.  

The intuition for the (non-)existence of the zero energy ground states follows from the solution of the recursion relation in Eq. (\ref{recursion}). It follows from sections II and III that the dimension of unfrustrated ground states is given by the solution of the recursion relation Eq. (\ref{recursion}) as long as $D_n$ is non-negative.  We also know from section III that $r \leq \frac{d^2}{4}$ implies that $D_n \geq 0$ $\forall n$, and that $r > \frac{d^2}{4}$ implies $D_n \leq 0$ for some $n$.  It is natural to conjecture that the Hamiltonian is frustrated in the regime $r > \frac{d^2}{4}$ for sufficiently large $n$.  

 We define $E_{n}$ to be the dimension of the kernel of the Hamiltonian on the first $n$ qudits, which we distinguish from $D_n$. The latter being the solution to the recursion Eq. (\ref{recursion}).  Of course we still have $E_n = D_n$ for sufficiently small $n$.  In this section we prove that, when $\frac{d^2}{2} \geq r > \frac{d^2}{4}$, $D_{n_0+1} \leq 0$ implies $E_{n_0+2} = 0$; i.e., the chain becomes frustrated.  Note that the restriction $\frac{d^2}{2} \geq r$ may be used without loss of generality (WLOG) since non-existence of unfrustrated states when $\frac{d^2}{2} \geq r > \frac{d^2}{4}$ automatically implies non-existence of unfrustrated states when $r > \frac{d^2}{2}$.  

We recall that $D_{n_0+1} = d D_{n_0} - r D_{n_0-1}$, so that $D_{n_0+1} \leq 0$ iff $\frac{D_{n_0}}{D_{n_0-1}} \leq \frac{r}{d}$.  Thus, we would like to start with this second condition and prove the desired result.  We begin with a lemma which gives us the desired result, but uses a slightly stronger condition.  

\begin{lem}
\label{k1} Assume that $n_0 \in \mathbb{Z}^+$ is such that $E_n = D_n > 0$ for $n \leq n_0-1$, $E_{n_0} >0$, and that $\lceil \frac{E_{n_0}}{E_{n_0-1}} \rceil \leq \frac{r}{d}$.  Then $E_{n_0+1} = 0$  with probability one.
\end{lem}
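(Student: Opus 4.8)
The plan is to identify $E_{n_0+1}$ with the dimension of the kernel of the constraint matrix $C$ of Eq.~\eqref{eq:Cmatrix} in the case $k=1$, and then to show that $C$ has full column rank with probability one. Write $a:=E_{n_0}$, $b:=E_{n_0-1}$ and $q:=\lceil a/b\rceil$. For $k=1$ the matrix $C$ has $rb$ rows (indexed by $1\le p\le r$ and $1\le\alpha_{n-1}\le b$) and $da$ columns (indexed by $1\le i_{n+1}\le d$ and $1\le\alpha_n\le a$), and the hypothesis $q\le r/d$ supplies the two arithmetic facts $da\le qdb\le rb$ and $qd\le r$ that I will use repeatedly below.

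First I would recast the kernel condition intrinsically. Collecting a candidate null vector $x$ into slices $x^{(i_{n+1})}\in\mathbb{C}^{a}$ and writing $\Gamma^{i_n}$ for the $b\times a$ matrix with entries $\Gamma^{i_n,[n]}_{\alpha_{n-1};\alpha_n}$, Eq.~\eqref{eq:LinearSys} becomes $\langle v^{p}|Z_{\alpha_{n-1}}(x)\rangle=0$ for all $p$ and $\alpha_{n-1}$, where $Z_{\alpha_{n-1}}(x):=\sum_{i_n,i_{n+1}}(\Gamma^{i_n}x^{(i_{n+1})})_{\alpha_{n-1}}\,|i_n,i_{n+1}\rangle\in\mathbb{C}^{d^{2}}$. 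Hence $Cx=0$ is equivalent to $Z_{\alpha_{n-1}}(x)\perp V$ for every $\alpha_{n-1}$, so that $\ker C$ depends only on the $r$-dimensional subspace $V=\mathrm{span}\{|v^{p}\rangle\}$ and not on the chosen orthonormal basis. Full column rank is therefore a Zariski-open condition on $V$; since the sampling measure is absolutely continuous with respect to Haar, it suffices to exhibit a single subspace $V$ with $\ker C=0$ in order to conclude full column rank with probability one.

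A short computation using the linear independence of the $|\psi^{[n]}_{\alpha_n}\rangle$ shows $\bigcap_{i_n}\ker\Gamma^{i_n}=0$, so the map $x\mapsto(Z_{\alpha_{n-1}}(x))_{\alpha_{n-1}}$ is injective and it is enough to arrange that its image meets $(V^{\perp})^{b}$ only at the origin. To this end I would take $V\supseteq U\otimes\mathbb{C}^{d}$ for a $q$-dimensional subspace $U\subseteq\mathbb{C}^{d}$ acting on the $i_n$ factor, which is possible because $qd\le r$, and pad $V$ arbitrarily up to dimension $r$. Then $V^{\perp}\subseteq U^{\perp}\otimes\mathbb{C}^{d}$, and the requirement $Z_{\alpha_{n-1}}(x)\in V^{\perp}$ for all $\alpha_{n-1}$ forces $\Gamma_u x^{(i_{n+1})}=0$ for every $u\in U$ and every $i_{n+1}$, where $\Gamma_u:=\sum_{i_n}\bar u_{i_n}\Gamma^{i_n}$. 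Equivalently, each slice $x^{(i_{n+1})}$ lies in the kernel of the $qb\times a$ matrix $\Gamma_U$ obtained by stacking $\Gamma_{u_1},\dots,\Gamma_{u_q}$ over a basis of $U$. Thus if $U$ can be chosen so that $\Gamma_U$ has full column rank $a$, then every slice vanishes, $x=0$, and the constructed $V$ certifies $\ker C=0$.

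The heart of the matter, and the step I expect to be the main obstacle, is the existence of a $q$-dimensional $U$ for which $\Gamma_U$ attains full column rank $a$; note that $qb\ge a$ leaves no slack, so each generic combination must contribute its maximal rank. This does \emph{not} follow from $\bigcap_{i_n}\ker\Gamma^{i_n}=0$ alone: if every combination $\Gamma_u$ had rank strictly below $b$, the intersection of only $q=\lceil a/b\rceil$ kernels could not collapse to zero, so one genuinely needs structural information about the $\Gamma^{i_n}$ produced by the recursion. This is precisely the role of Lemma~\ref{columnranklemma-1}, which I would invoke to certify that a generic $q$-dimensional $U$ makes $\Gamma_U$ full column rank. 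Granting it, the constructed $V$ yields $\ker C=0$, the rank-deficient locus is a proper subvariety of measure zero, and therefore $E_{n_0+1}=0$ with probability one; the only remaining bookkeeping is to verify the hypotheses of Lemma~\ref{columnranklemma-1} using $E_n=D_n$ for $n\le n_0-1$ together with $E_{n_0}>0$ and the arithmetic $da\le qdb\le rb$, $qd\le r$ recorded at the outset.
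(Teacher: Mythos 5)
Your proposal is correct and is essentially the paper's own argument in slightly more invariant dress: both identify $E_{n_0+1}$ with $\dim\ker C$, reduce via the geometrization theorem of \cite{L} to exhibiting one witness choice of projectors, take those projectors to span a subspace containing $U\otimes\mathbb{C}^{d}$ (padded to dimension $r$, possible since $qd\le r$) with $U$ the span of the first $q=\lceil E_{n_0}/E_{n_0-1}\rceil$ computational basis vectors of the $i_{n_0}$ factor --- which is exactly what the explicit assignment \eqref{projectorfill} does --- and then kill the kernel using Lemma \ref{columnranklemma-1} applied to rows of $\Gamma$ with $i_{n_0}\le q$. The one instantiation you should make explicit is that $U$ must be taken to be this \emph{coordinate} subspace rather than a ``generic'' one, since Lemma \ref{columnranklemma-1} certifies full rank only for row submatrices of $\mathcal{M}$ with $i_{n_0}\le \frac{d}{2}$ (the bound $q\le \frac{d}{2}$ coming from the standing WLOG $r\le \frac{d^{2}}{2}$, which the paper's proof also invokes and which your recorded arithmetic alone does not supply), not for arbitrary combinations $\Gamma_{u}$; with $U=\mathrm{span}\{e_{1},\dots,e_{q}\}$ your stacked matrix $\Gamma_{U}$ is literally such a submatrix with $qE_{n_0-1}\ge E_{n_0}$ rows, and selecting any $E_{n_0}$ of its rows the lemma gives full column rank, completing your argument.
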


\begin{proof}
For $k=1$ the constraint matrix is $C_{p,\alpha_{n_0-1};i_{n_0+1},\alpha_{n_0}}$, which has $r E_{n_0-1}$ rows, and $d E_{n_0}$ columns by definition.  It follows from the assumption that $d E_{n_0} \leq r E_{n_0-1}$, so $C$ has more rows than columns and one needs to prove linear independence of the columns in order to prove that the kernel is empty.  Thus, we must prove that the statement 

\begin{equation}
\sum_{i_{n_0+1}, \alpha_{n_0}} y_{i_{n_0+1}, \alpha_{n_0}} C_{p,\alpha_{n_0-1}; i_{n_0+1}, \alpha_{n_0} } =  \sum_{i_{n_0+1}, \alpha_{n_0}} y_{i_{n_0+1}, \alpha_{n_0}} \left(\sum_{i_{n_0}}\langle v_{n_0,(n_0+1)}^{p}|i_{n_0},i_{n_0+1}\rangle\Gamma_{\alpha_{n_0-1};\alpha_{n_0}}^{i_{n_0},[n_0]}\right) = 0 \quad
\forall p,\alpha_{n_0-1} \label{columnlinindep}
\end{equation}
implies

\begin{equation}
y_{i_{n_0+1}, \alpha_{n_0}} = 0 \text{   } \forall  \text{ } i_{n_0+1}, \alpha_{n_0}.
\end{equation}

Following the reasoning in the Appendix, we know that (WLOG, and with probability 1) we may apply Lemma \ref{columnranklemma-1} to row reduce the matrix $\Gamma_{\alpha_{n_0-1};\alpha_{n_0}}^{i_{n_0},[n_0]}$ on the set of rows

\begin{align}
& s \equiv \left \{ (i_{n_0}, \alpha_{n_0-1}) :  i_{n_0} \in \left [1, ..., \left \lfloor \frac{E_{n_0}}{E_{n_0-1}} \right \rfloor   \right ] \text{, } \alpha_{n_0-1} \in  \left [1, ..., E_{n_0-1} \right ]  \right \} \label{GammaRed}\\ 
& \cup  \left \{ (i_{n_0}, \alpha_{n_0-1}) :  i_{n_0} =  \left \lceil \frac{E_{n_0}}{E_{n_0-1}} \right \rceil  \text{, }  \alpha_{n_0-1} \in  \left [1, ..., E_{n_0}  - E_{n_0-1} \cdot \left \lfloor \frac{E_{n_0}}{E_{n_0-1}} \right \rfloor \right ] \right \} \nonumber
\end{align}
Note, in particular that $|s| = E_{n_0}$, and for $(i_{n_0}, \alpha_{n_0-1}) \in s$ we have $i_{n_0} \leq \left \lceil \frac{E_{n_0}}{E_{n_0-1}} \right \rceil \leq \frac{r}{d} \leq \frac{d}{2}$ since $r \leq \frac{d^2}{2}$ by assumption.  Thus we have satisfied the requirements of Lemma \ref{columnranklemma-1} and may assume WLOG that $\Gamma_{\alpha_{n_0-1};\alpha_{n_0}}^{i_{n_0},[n_0]}$ is row reduced on the rows corresponding to $s$.

It follows that, given $\alpha_{n_0}'$, $\exists (i_{n_0}', \alpha_{n_0-1}') \in s$, such that

\begin{eqnarray}
\Gamma_{\alpha_{n_0-1};\alpha_{n_0}'}^{i_{n_0},[n_0]} = \left\{
\begin{array}{c l}      
    1 & \text{ if } (i_{n_0}, \alpha_{n_0-1}) = (i_{n_0}', \alpha_{n_0-1}') \\ 
    0 & \text{ if } (i_{n_0}, \alpha_{n_0-1}) \in s \text{ and } (i_{n_0}, \alpha_{n_0-1}) \neq (i_{n_0}', \alpha_{n_0-1}') .    
\end{array}\right. 
\label{gamma_diag}
\end{eqnarray}

Similarly, we know from the ``geometrization theorem'' of \cite{L} that we only need to prove full rankness of columns of $C_{p,\alpha_{n_0-1};i_{n_0+1},\alpha_{n_0}}$ for a specific choice of projectors.  It will then hold with probability 1 for random projectors.

We will assign projectors as follows:

\begin{eqnarray}
\langle v_{n_0,(n_0+1)}^{p}|i_{n_0},i_{n_0+1}\rangle= \left\{
\begin{array}{c l}      
    1 & \text{ if } i_{n_0} =  \left \lfloor \frac{p}{d} \right \rfloor+1  \text{ and } i_{n_0+1} = p - d \left \lfloor \frac{p}{d} \right \rfloor  \\ 
    0 & \text{ otherwise } .    
\end{array}\right. 
\label{projectorfill}
\end{eqnarray}

Now, given Eq. \eqref{columnlinindep} we will show 

\begin{equation}
y_{i_{n_0+1}, \alpha_{n_0}} = 0 \text{   } \forall  \text{ } i_{n_0+1}, \alpha_{n_0}.
\end{equation}

Given $(i_{n_0+1}', \alpha_{n_0}')$ we choose $(i_{n_0}', \alpha_{n_0-1}')$ corresponding to $\alpha_{n_0}'$ in Eq. \eqref{gamma_diag}.  We then choose $p'$ so that 

\begin{eqnarray}
\langle v_{n_0,(n_0+1)}^{p'}|i_{n_0},i_{n_0+1}\rangle= \left\{
\begin{array}{c l}      
    1 & \text{ if } (i_{n_0},i_{n_0+1}) = (i_{n_0}',i_{n_0+1}')  \\ 
    0 & \text{ otherwise } .    
\end{array}\right. 
\end{eqnarray}

We know that such a $p'$ exists in the range $p \in [1, ..., r]$ because we know $i_{n_0}' \leq \left \lceil \frac{E_{n_0}}{E_{n_0-1}} \right \rceil \leq \frac{r}{d}$ by assumption, so we have $d i_{n_0}' \leq r$.  The existence of such a $p'$ now follows from Eq. \eqref{projectorfill}.  

Eq. \eqref{columnlinindep} now collapses as follows 

\begin{align}
& 0 = \sum_{i_{n_0+1}, \alpha_{n_0}} y_{i_{n_0+1}, \alpha_{n_0}} C_{p',\alpha_{n_0-1}'; i_{n_0+1}, \alpha_{n_0} } =  \sum_{i_{n_0+1}, \alpha_{n_0}} y_{i_{n_0+1}, \alpha_{n_0}} \left(\sum_{i_{n_0}}\langle v_{n_0,(n_0+1)}^{p'}|i_{n_0},i_{n_0+1}\rangle\Gamma_{\alpha_{n_0-1}';\alpha_{n_0}}^{i_{n_0},[n_0]}\right)  \label{collapse} \\  
& =   \sum_{ \alpha_{n_0}} y_{i_{n_0+1}', \alpha_{n_0}} \Gamma_{\alpha_{n_0-1}';\alpha_{n_0}}^{i_{n_0}',[n_0]} = y_{i_{n_0+1}', \alpha_{n_0}'}  \nonumber
\end{align}

Since $(i_{n_0+1}', \alpha_{n_0}')$ was arbitrary we have now proved 

\begin{equation}
y_{i_{n_0+1}, \alpha_{n_0}} = 0 \text{   } \forall  \text{ } i_{n_0+1}, \alpha_{n_0}
\end{equation}

so that the desired result follows.

\end{proof}

Note, in Lemma \ref{k1} that, given $E_{n_0+1} = 0$ it follows easily from the the definition of $E_n$, that $E_n = 0$  $\forall$ $n \geq n_0+1$.

\bigskip

Now, as discussed earlier, we would like to be able to prove that $E_{n_0+1} = 0$ using only the condition $\frac{E_{n_0}}{E_{n_0-1}} \leq \frac{r}{d}$.  However, Lemma \ref{k1} uses the assumption $\lceil \frac{E_{n_0}}{E_{n_0-1}} \rceil \leq \frac{r}{d}$ which is slightly stronger.  We can work around this using the following strategy:  instead of proving $E_{n_0+1} = 0$, we use reasoning similar to that in Lemma \ref{k1} to show that $E_{n_0+1}$ is fairly small.  The bound on $E_{n_0+1}$ will be sufficient to show that $\lceil \frac{E_{n_0+1}}{E_{n_0}} \rceil \leq \frac{r}{d}$  (when $d \geq 8$) and then we can apply Lemma \ref{k1} to show that $E_{n_0+2} = 0$.  This intuition is made precise in the following theorem.

\begin{theorem}
\label{k1p} If $\lceil{\frac{d^2}{2r}}\rceil<\frac{r}{d}$, then the Hamiltonian for qudits on the line with generic local interactions is frustrated for sufficiently large $n$ with probability one. 
\end{theorem}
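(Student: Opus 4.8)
The plan is to locate the first step at which the recursion \eqref{recursion} stops predicting a positive dimension, and then bootstrap Lemma \ref{k1}. First I would use the stated WLOG reduction to assume $\frac{d^2}{4} < r \le \frac{d^2}{2}$. Since $r > \frac{d^2}{4}$, Section \ref{sec:Recursion-Analysis} guarantees that $D_n$ eventually becomes non-positive, so I let $n_0$ be the smallest index with $D_{n_0+1}\le 0$. For $n\le n_0$ the recursion produces positive values, so the constraint matrix has strictly more columns than rows and (being full rank with probability one) has kernel dimension exactly $D_n$; thus $E_n = D_n > 0$ for $n\le n_0$. The inequality $D_{n_0+1}=dD_{n_0}-rD_{n_0-1}\le 0$ is precisely $t\le\frac{r}{d}$, where I set $t\equiv\frac{E_{n_0}}{E_{n_0-1}}$. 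If in fact $\lceil t\rceil\le\frac{r}{d}$, then Lemma \ref{k1} applies verbatim and $E_{n_0+1}=0$; so the only case needing work is $\lceil t\rceil>\frac{r}{d}$ with $t$ non-integer.

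In that case the plan is \emph{not} to kill every coefficient $y_{i_{n_0+1},\alpha_{n_0}}$ but to count how many can survive. I would rerun the argument of Lemma \ref{k1}: apply Lemma \ref{columnranklemma-1} to row-reduce $\Gamma^{i_{n_0},[n_0]}_{\alpha_{n_0-1};\alpha_{n_0}}$ on the set $s$ of \eqref{GammaRed} so that each column $\alpha_{n_0}'$ acquires a unique pivot $(i_{n_0}',\alpha_{n_0-1}')\in s$ as in \eqref{gamma_diag}, and then use the explicit assignment \eqref{projectorfill}. A detecting projector isolating $(i_{n_0}',i_{n_0+1}')$ exists within the range $p\le r$ exactly when $d(i_{n_0}'-1)+i_{n_0+1}'\le r$. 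For every column whose pivot has $i_{n_0}'\le\lfloor t\rfloor$ this holds for all $i_{n_0+1}'$ (since $d\lfloor t\rfloor\le dt\le r$), so those coefficients vanish via the collapse \eqref{collapse}; only columns whose pivot sits in a row with $i_{n_0}'=\lceil t\rceil$ can retain coefficients, and only for the $d\lceil t\rceil-r$ values of $i_{n_0+1}'$ that overflow the range $p\le r$. Since the number of such pivot rows in $s$ is $E_{n_0}-\lfloor t\rfloor E_{n_0-1}$, this yields $E_{n_0+1}\le (d\lceil t\rceil-r)(E_{n_0}-\lfloor t\rfloor E_{n_0-1})$.

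I would then convert this into a bound on the ratio $\frac{E_{n_0+1}}{E_{n_0}}$. Writing $u=t-\lfloor t\rfloor$ and using $r\ge dt$ gives $d\lceil t\rceil-r\le d(1-u)$, while $E_{n_0}-\lfloor t\rfloor E_{n_0-1}=u\,E_{n_0-1}$; together with $E_{n_0}=t\,E_{n_0-1}$ and $u(1-u)\le\frac14$ this collapses to $\frac{E_{n_0+1}}{E_{n_0}}\le\frac{d}{4t}$. The lower bound $t\ge\lfloor t\rfloor>\frac{r}{d}-1$, forced by $d\lceil t\rceil>r$, then gives $\frac{E_{n_0+1}}{E_{n_0}}<\frac{d^2}{4(r-d)}$, and the hypothesis $\lceil\frac{d^2}{2r}\rceil<\frac{r}{d}$, which in this regime forces $r>2d$ (and certainly holds for $d\ge 8$, $r>\frac{d^2}{4}$), yields $\frac{d^2}{4(r-d)}\le\frac{d^2}{2r}$. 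Hence by monotonicity of the ceiling, $\lceil\frac{E_{n_0+1}}{E_{n_0}}\rceil\le\lceil\frac{d^2}{2r}\rceil<\frac{r}{d}$, which is exactly the hypothesis of Lemma \ref{k1} at index $n_0+1$.

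To finish: if $E_{n_0+1}=0$ we are done; otherwise $E_n=D_n>0$ for $n\le n_0$, $E_{n_0+1}>0$, and the ratio bound just obtained let me invoke Lemma \ref{k1} once more to conclude $E_{n_0+2}=0$. Either way $E_m=0$ for some $m\le n_0+2$, and the remark following Lemma \ref{k1} propagates this to all larger $n$, so the chain is frustrated for sufficiently large $n$ with probability one. I expect the main obstacle to be the second paragraph: correctly bookkeeping \emph{which} coefficients survive the projector-isolation step — in particular justifying that the overflow count is exactly $d\lceil t\rceil-r$ per surviving column, that the row-reduction on $s$ really does make the pivot rows a bijection with the columns, and that nothing is double counted — since everything afterward is elementary arithmetic built on that single bound.
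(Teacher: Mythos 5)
Your proposal is correct and takes essentially the same route as the paper's proof: the same case split on whether $\lceil E_{n_0}/E_{n_0-1}\rceil \le \frac{r}{d}$, the same row reduction of $\Gamma^{i_{n_0},[n_0]}_{\alpha_{n_0-1};\alpha_{n_0}}$ via Lemma \ref{columnranklemma-1} on the set $s$ of Eq.~\eqref{GammaRed} with the projector assignment \eqref{projectorfill}, exclusion of precisely the columns whose pivot row has $i_{n_0}'=\lceil t\rceil$ with overflowing $i_{n_0+1}'$, the resulting bound $E_{n_0+1}\le \frac{d E_{n_0-1}}{4}$ via $u(1-u)\le\frac{1}{4}$, and a second application of Lemma \ref{k1} at index $n_0+1$ to conclude $E_{n_0+2}=0$. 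The only (harmless) deviations are refinements: your exclusion count $d\lceil t\rceil-r$ per bad pivot is the exact integer count of which the paper's $d\bigl(1-(t-\lfloor t\rfloor)\bigr)$ is an upper bound, and your chain $t>\frac{r-d}{d}$ together with $r\ge 2d$ (note it forces $r\ge 2d$, not strictly $r>2d$, but your use of ``$\le$'' absorbs this) is a cleaner and in fact corrected rendering of the paper's inequality chain leading to $\frac{E_{n_0+1}}{E_{n_0}}<\frac{d^2}{2r}$, whose printed intermediate step $\frac{2}{\lfloor t\rfloor}<\frac{2d}{r}$ is dubious since $\lfloor t\rfloor\le \frac{r}{d}$.
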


Comment: $r > \frac{d^2}{4}$ and $d \geq 8$ together imply $\lceil{\frac{d^2}{2r}}\rceil<\frac{r}{d}$, so that this theorem is always valid when $r > \frac{d^2}{4}$, and $d \geq 8$.

\begin{proof}
Assume that $n_0 \in \mathbb{Z}^+$ is such that $E_n = D_n > 0$ for $n \leq n_0$, $D_{n_0} = E_{n_0} >0$, and $D_{n_0+1} \leq 0$, so that $\frac{D_{n_0}}{D_{n_0-1}}  \leq \frac{r}{d}$.  There are now two cases.  If we have that $\lceil \frac{D_{n_0}}{D_{n_0-1}} \rceil  \leq \frac{r}{d}$, then we can apply Lemma \ref{k1} directly to show that $E_{n_0+1} = 0$.  This, in turn implies that $E_n = 0$ for all $n \geq n_0+1$ so that we have $E_{n_0+2} = 0$, and we are done.

In the second case, we have $\lceil \frac{D_{n_0}}{D_{n_0-1}} \rceil  > \frac{r}{d}$.  Since we know $\frac{D_{n_0}}{D_{n_0-1}}  \leq \frac{r}{d}$, we also have $\lfloor \frac{D_{n_0}}{D_{n_0-1}} \rfloor \leq \frac{r}{d}$.  Our goal now is to show that $E_{n_0+1}$ is small by showing that a large subset of the columns of $C_{p,\alpha_{n_0-1}; i_{n_0+1}, \alpha_{n_0} }$ are linearly indepedent.  We will accomplish this by following the general idea behind the proof of Lemma \ref{k1}, except that the role of $\lceil \frac{D_{n_0}}{D_{n_0-1}} \rceil = \lceil \frac{E_{n_0}}{E_{n_0-1}} \rceil$ will be replaced by $\lfloor \frac{E_{n_0}}{E_{n_0-1}} \rfloor$.  As a result we will not be able to prove that $C_{p,\alpha_{n_0-1}; i_{n_0+1}, \alpha_{n_0} }$ has full column rank, but we will select a subset $F$ of the pairs $(i_{n_0+1}, \alpha_{n_0})$, and prove linear indepedence for the corresponding columns of $C_{p,\alpha_{n_0-1}; i_{n_0+1}, \alpha_{n_0} }$ (that is, only for those columns whose labels are contained in $F$).

We first recall the fact that we may, WLOG, use Lemma \ref{columnranklemma-1} to row reduce the matrix $\Gamma_{\alpha_{n_0-1};\alpha_{n_0}}^{i_{n_0},[n_0]}$ on the set $s$ given by Eq. \ref{GammaRed}.

Note, in particular that $|s| = E_{n_0}$.  Further note that, for $(i_{n_0}, \alpha_{n_0-1}) \in s$, we have $i_{n_0} \leq \left \lceil \frac{E_{n_0}}{E_{n_0-1}} \right \rceil \leq \frac{d}{2}$.  The statement $\left \lceil \frac{E_{n_0}}{E_{n_0-1}} \right \rceil \leq \frac{d}{2}$ follows because we assume $r \leq \frac{d^2}{2}$ WLOG (just as in Lemma \ref{k1}), and we have one of two cases.  Either $\frac{E_{n_0}}{E_{n_0-1}}  \leq \frac{r}{d} < \frac{d}{2}$ so that $\left \lceil \frac{E_{n_0}}{E_{n_0-1}} \right \rceil \leq \frac{E_{n_0}}{E_{n_0-1}}  +1 \leq \frac{d}{2}$, or $\frac{E_{n_0}}{E_{n_0-1}}  \leq \frac{r}{d} < \frac{d}{2}$, in which case $\left \lceil \frac{E_{n_0}}{E_{n_0-1}} \right \rceil = \frac{E_{n_0}}{E_{n_0-1}}  \leq \frac{d}{2}$.  Thus, we have satisfied the requirements of Lemma \ref{columnranklemma-1} and may assume WLOG that $\Gamma_{\alpha_{n_0-1};\alpha_{n_0}}^{i_{n_0},[n_0]}$ is row reduced on the rows corresponding to $s$.

It follows that, given $\alpha_{n_0}'$, $\exists (i_{n_0}', \alpha_{n_0-1}') \in s$, such that

\begin{eqnarray}
\Gamma_{\alpha_{n_0-1};\alpha_{n_0}'}^{i_{n_0},[n_0]} = \left\{
\begin{array}{c l}      
    1 & \text{ if } (i_{n_0}, \alpha_{n_0-1}) = (i_{n_0}', \alpha_{n_0-1}') \\ 
    0 & \text{ if } (i_{n_0}, \alpha_{n_0-1}) \in s \text{ and } (i_{n_0}, \alpha_{n_0-1}) \neq (i_{n_0}', \alpha_{n_0-1}') .    
\end{array}\right. 
\label{gamma_diag2}
\end{eqnarray}

Since $|s| = E_{n_0}$ it follows that, given $\alpha_{n_0}'$, the corresponding $(i_{n_0}', \alpha_{n_0-1}')$ is unique.

Similarly, we know from the ``geometrization theorem'' of \cite{L} that we only need to prove full rankness of columns of $C_{p,\alpha_{n_0-1};i_{n_0+1},\alpha_{n_0}}$ for a specific choice of projectors.  It will then hold with probability 1 for random projectors.

We will assign projectors exactly as in Lemma \ref{k1}:

\begin{eqnarray}
\langle v_{n_0,(n_0+1)}^{p}|i_{n_0},i_{n_0+1}\rangle= \left\{
\begin{array}{c l}      
    1 & \text{ if } i_{n_0} =  \left \lfloor \frac{p}{d} \right \rfloor+1  \text{ and } i_{n_0+1} = p - d \left \lfloor \frac{p}{d} \right \rfloor  \\ 
    0 & \text{ otherwise } .    
\end{array}\right. 
\label{projectorfill2}
\end{eqnarray}

We will say that a value $p$, and a tuple $(i_{n_0}, i_{n_0+1})$ are associated if $i_{n_0} =  \left \lfloor \frac{p}{d} \right \rfloor+1  \text{ and } i_{n_0+1} = p - d \left \lfloor \frac{p}{d} \right \rfloor $.  Since there are $r$ projectors we know that $p \in [1, ..., r ]$.  

Recall that $F$ is the set of column labels for the columns of $C_{p,\alpha_{n_0-1}; i_{n_0+1}, \alpha_{n_0} }$ that we wish to prove are linearly independent.  In order to apply the argument of Lemma \ref{k1} we need that, for every $(i_{n_0+1}', \alpha_{n_0}') \in F$, there exists $p' \in [1, ..., r ]$ such that $\langle v_{n_0,(n_0+1)}^{p}|i_{n_0}',i_{n_0+1}'\rangle = 1$  (here $i_{n_0}'$ is the coordinate of the tuple $(i_{n_0}', \alpha_{n_0-1}')$ corresponding to $\alpha_{n_0}'$ via \eqref{gamma_diag2}).  In other words, we need that there exists a value $p' \in [1, ..., r ]$ that is associated with the tuple $(i_{n_0}',i_{n_0+1}')$.

From Eq. \eqref{projectorfill2} we see that, since $d \left \lceil \frac{E_{n_0}}{E_{n_0-1}} \right \rceil \geq  r \geq d \frac{E_{n_0}}{E_{n_0-1}} \geq d \left \lfloor \frac{E_{n_0}}{E_{n_0-1}} \right \rfloor$, the only time the above conditions could fail is when $i_{n_0}'  = \left \lceil \frac{E_{n_0}}{E_{n_0-1}} \right \rceil$.  This follows because, for $i_{n_0}  < \left \lceil \frac{E_{n_0}}{E_{n_0-1}} \right \rceil$, there is always a value of $p \in [1, ..., r ]$ associated to the tuple $(i_{n_0}, i_{n_0+1})$ regardless of the value of $i_{n_0+1}$.  

It follows from Eq. \eqref{gamma_diag2} that there are exactly $E_{n_0} - \left \lfloor \frac{E_{n_0}}{E_{n_0-1}} \right \rfloor E_{n_0-1}$ values of $\alpha_{n_0}'$ such that the corresponding $(i_{n_0}', \alpha_{n_0-1}')$ has $i_{n_0}' = \left \lceil \frac{E_{n_0}}{E_{n_0-1}} \right \rceil$.  As discussed above, only columns with labels $(i_{n_0+1}', \alpha_{n_0}')$ containing such an $\alpha_{n_0}'$ must be excluded from the set $F$ of linearly indepedent columns.  In fact, we need not exclude quite so many.  Since  $r \geq d \frac{E_{n_0}}{E_{n_0-1}}$, it follows from Eq. \eqref{projectorfill2} that, if $i_{n_0} = \left \lceil \frac{E_{n_0}}{E_{n_0-1}} \right \rceil$, and $i_{n_0+1} \leq d \left ( \frac{E_{n_0}}{E_{n_0-1}} - \left \lfloor \frac{E_{n_0}}{E_{n_0-1}} \right \rfloor  \right )$, then there is a still a $p \in [1, ..., r ]$ which is associated with $(i_{n_0}, i_{n_0+1})$.  Thus, we only have to remove a tuple $(i_{n_0+1}', \alpha_{n_0}')$ from $F$ when $\alpha_{n_0}'$ is such that $i_{n_0}' = \left \lceil \frac{E_{n_0}}{E_{n_0-1}} \right \rceil$, and $i_{n_0+1} > d \left ( \frac{E_{n_0}}{E_{n_0-1}} - \left \lfloor \frac{E_{n_0}}{E_{n_0-1}} \right \rfloor  \right )$.  It follows that we only need to remove 

$$\left(d - d \left ( \frac{E_{n_0}}{E_{n_0-1}} - \left \lfloor \frac{E_{n_0}}{E_{n_0-1}} \right \rfloor  \right ) \right) \left(  E_{n_0} - \left \lfloor \frac{E_{n_0}}{E_{n_0-1}} \right \rfloor E_{n_0-1}     \right )  $$

$$= d E_{n_0-1}  \left(1 -  \left ( \frac{E_{n_0}}{E_{n_0-1}} - \left \lfloor \frac{E_{n_0}}{E_{n_0-1}} \right \rfloor  \right ) \right) \left(  \frac{E_{n_0}}{E_{n_0-1}}- \left \lfloor \frac{E_{n_0}}{E_{n_0-1}} \right \rfloor     \right )  \leq \frac{d E_{n_0-1} }{4} $$
tuples from $F$ in order to gaurantee that those remaining can be proved to be a linearly independent set of columns via the proof in Lemma \ref{k1} as in Eq. \ref{collapse}.  The final inequality above follows from the fact that $ \left(  \frac{E_{n_0}}{E_{n_0-1}}- \left \lfloor \frac{E_{n_0}}{E_{n_0-1}} \right \rfloor     \right )$ is a positive number less than 1.

The total number of columns of $C_{p,\alpha_{n_0-1}; i_{n_0+1}, \alpha_{n_0} }$ is $d E_{n_0}$.  We have shown that at least $|F| = d E_{n_0} - \frac{d E_{n_0-1} }{4}$ of those columns are linearly independent.  Thus the dimension of the kernel of $C_{p,\alpha_{n_0-1}; i_{n_0+1}, \alpha_{n_0} }$ is at most $\frac{d E_{n_0-1} }{4}$.  That is, we now have the bound $E_{n_0+1} \leq \frac{d E_{n_0-1} }{4}$.    

Now, we know that $\left \lceil \frac{E_{n_0}}{E_{n_0-1}} \right \rceil > \frac{r}{d} \geq 1$, and it follows that

$$\frac{E_{n_0-1}}{E_{n_0}} \leq \frac{1}{\left \lfloor \frac{E_{n_0}}{E_{n_0-1}} \right \rfloor - 1} \leq \frac{2}{\left \lfloor \frac{E_{n_0}}{E_{n_0-1}} \right \rfloor} < \frac{2d}{r}$$

Thus,

$$\frac{E_{n_0+1}}{E_{n_0}} \leq   \frac{d E_{n_0-1}}{4 E_{n_0}}     < \frac{d^2}{2r}    $$

So,

$$   \left \lceil \frac{E_{n_0+1}}{E_{n_0}} \right \rceil \leq \left \lceil \frac{d^2}{2r} \right \rceil  \leq \frac{r}{d}   $$

where the final inequality follows by assumption.  Applying Lemma \ref{k1} now gives $E_{n_0+2} = 0$, and we are done.

\end{proof}

\section{Appendix}

We now prove that the constraint matrix $C$ in Eq. \eqref{eq:Cmatrix}
is generically full rank; i.e., with probability $1$. The proof given here
is a generalization of that given in \cite{RS} for qudit chains.
Just as in that earlier proof, we use the ``geometrization theorem''
of \cite{L} to prove full rankness by finding a single set of projectors,
$|v_{n^{(l)},(n+1)^{(h)}}^{p}\rangle$, for which $C$ is full rank. This will be sufficient to prove that $C$ will be full rank with probability one if the projectors are picked at random.

For simplicity we will assume that $k\leq \frac{d}{2}$. It is not clear whether
this is necessary for existence of unfrustrated ground states. However,
if the tree with $k\le \frac{d}{2}$ is frustrated then a larger tree with the same parameters except $k> \frac{d}{2}$ will also be frustrated since it contains a subtree with $k\le \frac{d}{2}$. 
We assume for simplicity that $d$ is even.

The example used to prove full rankness for $k=1$  in \cite{RS} involves an
inductive process by which certain entries of the $\Gamma$ matrices can be found
explicitly. To gain the additional flexibility
needed to prove full-rankness of $C$ for $k>1$ we will introduce
a new technique using the idea that we can, WLOG,
take invertible linear combinations of the $\Gamma_{\alpha_{n-1}^{(1\cdots k)};\alpha_{n}^{(l)}}^{i_{n}^{(l)},[n]}$.
Considering the $\Gamma$'s to be a set of vectors indexed by $\alpha_{n}^{(l)}$,
this is equivalent to taking an invertible change of basis for
the $|\psi_{\alpha_{n}}^{(l)}\rangle$ which does not change the ground space.

Viewing $\Gamma_{\alpha_{n-1}^{(1\cdots k)};\alpha_{n}^{(l)}}^{i_{n}^{(l)},[n]}$ as a $dD_{n-1}^k \times D_n$ matrix with $D_n$ independent columns indexed by $1\le\alpha_{n}^{\left(l\right)}\le D_n$ , let $A$ be an invertible linear map on $D_{n}$
vectors of $\Gamma_{\alpha_{n-1}^{(1\cdots k)};\alpha_{n}^{(l)}}^{i_{n}^{(l)},[n]}$. Then $A$ induces the map $\mathbb{I}_{d}\otimes\mathbb{I}_{D_{n}^{k-1}}\otimes A$ on $C$ by $C\rightarrow C\left(\mathbb{I}_{d}\otimes\mathbb{I}_{D_{n}^{k-1}}\otimes A\right)$.
Since $\mathbb{I}_{d}\otimes\mathbb{I}_{D_{n}^{k-1}}\otimes A$ is invertible, the rank of $C$ is
preserved under this transformation. $ $

We will use this fact in order to run Gaussian elimination on the
$\Gamma_{\alpha_{n-1}^{(1\cdots k)};\alpha_{n}^{(l)}}^{i_{n}^{(l)},[n]}$
and thereby specify certain entries explicitly; more entries than would be attainable using the proof
in \cite{RS}. 
\begin{lem}
\label{columnranklemma-1} For any fixed $l$ consider the $dD_{n-1}^{k}\times D_{n}$
matrix $\mathcal{M}_{(i_{n}^{(l)},\alpha_{n-1}^{(1\cdots k)});\beta_{n}^{(l)}}^{\left(l\right)}\equiv\Gamma_{\alpha_{n-1}^{(1\cdots k)};\beta_{n}^{(l)}}^{i_{n}^{(l)},[n]}$.
Then any $s\times D_{n}$ sub-matrix $W$ in $\mathcal{M}$, with
$s\le D_{n}$ and $i_{n}^{(l)}\leq\frac{d}{2}$ , has rank $s$ with
probability $1$.\end{lem}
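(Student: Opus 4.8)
The plan is to prove the statement through the geometrization theorem of \cite{L}, reducing the probability-one claim to the construction of a single convenient instance. First I would record the reformulation implicit in the column-operation discussion preceding the lemma: right-multiplying $\mathcal{M}=\Gamma^{[n]}$ by an invertible $D_n\times D_n$ matrix replaces one basis of the solution space by another and leaves the column space $V\equiv\ker C^{[n]}$, hence the rank of every row-submatrix $W$, unchanged. Thus the assertion is a basis-independent property of $V$. Concretely, writing $T$ for the set of row indices with $i_n^{(l)}\le d/2$ and $\pi_S$ for the coordinate projection onto a subset $S\subseteq T$ with $|S|=s\le D_n$, the rank of the corresponding $W$ equals $\dim\pi_S(V)$, so the claim is exactly that $\pi_S|_V$ is surjective for every such $S$.

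For each fixed $S$ the condition that $\pi_S|_V$ is surjective is the nonvanishing of some $s\times s$ minor of $W$; on the Zariski-open locus where $C$ attains its generic corank $D_n$ the entries of a kernel basis $\mathcal{M}$ are rational in the projector data, so this is a Zariski-open condition on the vectors $|v^p\rangle$. By the geometrization theorem \cite{L} it therefore suffices, for each $S$, to exhibit a single choice of the $|v^p\rangle$ making that minor nonzero; since there are only finitely many admissible $S$, the corresponding finitely many full-measure events hold simultaneously, giving the conclusion with probability one.

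I would build the required instance by induction on the level $n$, carrying the stronger invariant that, for the chosen projectors, a basis $\mathcal{M}=\Gamma^{[n]}$ of $\ker C^{[n]}$ can be put (by the allowed column operations) into a reduced form in which the rows with $i_n^{(l)}\le d/2$ are in general position in $\mathbb{C}^{D_n}$. The base case is the leaves, where $\Gamma^{[1]}$ is the $d\times d$ identity and the rows with $i_1^{(l)}\le d/2$ are distinct standard basis vectors, trivially in general position. For the inductive step I would substitute the explicit sparse $\Gamma^{[n-1]}$ together with an explicit sparse assignment of the $|v^p\rangle$ (in the spirit of \eqref{projectorfill}) into the definition \eqref{eq:Cmatrix} of $C^{[n]}$; this renders $C^{[n]}$ a sparse matrix with $0/1$ entries whose kernel admits an explicit description, and I would then use the Gaussian-elimination freedom on the $\Gamma$'s to pivot each kernel vector on a distinct low-$i_n^{(l)}$ row, re-establishing the invariant.

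The main obstacle is the inductive step for $k>1$: the constraint matrix \eqref{eq:Cmatrix} carries the tensor structure of the $k$ children through the Kronecker-delta factors over the children $j\ne l$, so the kernel no longer decouples as in the $k=1$ analysis of \cite{RS}, and one must verify that enough independent pivot rows survive inside the restricted range $i_n^{(l)}\le d/2$. This is precisely where the hypotheses $i_n^{(l)}\le d/2$ and $k\le d/2$ enter: they guarantee that the ``upper half'' of the physical index supplies more free directions than the constraints consume, so that the column operations on the $\Gamma$'s — the new ingredient beyond \cite{RS} — can seat all $D_n$ pivots among the low-$i_n^{(l)}$ rows and thereby place those rows in general position. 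Verifying this counting, and confirming that the resulting minors are genuinely nonzero rather than accidentally degenerate, is the technical heart of the argument.
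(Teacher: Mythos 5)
There is a genuine gap here: your proposal is a plan whose decisive step --- exhibiting the single witness instance that the geometrization argument demands --- is never carried out, and you concede as much when you call the inductive step ``the technical heart of the argument.'' The scaffolding you set up is correct and matches the paper's (right-multiplication of $\mathcal{M}$ by an invertible $A$ preserves the rank of every row-submatrix, so the claim is a basis-independent property of the solution space, and by the geometrization theorem of \cite{L} one good instance per choice of rows suffices; your remark about restricting to the locus of generic corank is, if anything, more careful than the paper). But the witness construction you sketch --- an induction over levels with one-hot projectors in the spirit of Eq.~\eqref{projectorfill}, explicit $0/1$ kernels, and re-pivoting at every level --- is both far harder than necessary and exactly where you predict failure for $k>1$, since the Kronecker-delta structure of Eq.~\eqref{eq:Cmatrix} couples the children and makes the kernel intractable to track explicitly.

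The missing idea is that the witness projectors for this lemma should not touch the rows in question at all. The paper chooses the level-$n$ projectors so that $\langle v_{(n-1)^{(m)},n^{(l)}}^{p}|i_{n-1}^{(m)},i_{n}^{(l)}\rangle=0$ whenever $i_{n}^{(l)}\le\frac{d}{2}$, which is possible because these conditions confine the $r$ orthonormal vectors to the span of the $|i_{n-1}^{(m)},i_{n}^{(l)}\rangle$ with $i_{n}^{(l)}>\frac{d}{2}$, a subspace of dimension $\frac{d^{2}}{2}\ge r$ in the regime considered. With this choice, the constraints \eqref{constraint1} involve only the entries of $\Gamma_{\alpha_{n-1}^{(1\cdots k)};\alpha_{n}^{(l)}}^{i_{n}^{(l)},[n]}$ with $i_{n}^{(l)}>\frac{d}{2}$, so every row of $\mathcal{M}$ with $i_{n}^{(l)}\le\frac{d}{2}$ is completely unconstrained; since $s\le D_{n}$, one simply assigns those free entries values making $W$ have rank $s$. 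This one-line trivialization eliminates the induction, the pivot counting, and the nondegeneracy worry entirely --- no property of the lower levels and no tracking of the kernel of $C$ is needed. Your intuition that the hypotheses $i_{n}^{(l)}\le\frac{d}{2}$ guarantee that ``the upper half of the physical index supplies more free directions than the constraints consume'' had the right scent, but the resolution is to make the constraints consume \emph{only} the upper half, not to verify a count. (Note also that the one-hot assignment \eqref{projectorfill} you were emulating serves a different purpose in the paper, namely the column-rank arguments of Lemma \ref{k1} and the full-rankness of $C$, where the projectors must deliberately hit the low physical indices --- the opposite of what this lemma needs.)
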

\begin{proof}
{} By the argument in the ``geometrization theorem'' of \cite{L},
it is sufficient to prove this statement for a specific choice of
projectors $|v_{(n-1)^{(m)},n^{(l)}}^{p}\rangle$. We will choose
projectors such that

\[
\langle v_{(n-1)^{(m)},(n)^{(l)}}^{p}|i_{n-1}^{(m)},i_{n}^{(l)}\rangle=0,\quad\mbox{for }i_{n}^{(l)}\leq\frac{d}{2}.
\]

From Eq. \eqref{constraint1} the constraint on $\Gamma_{\alpha_{n-1}^{(1\cdots k)};\alpha_{n}^{(l)}}^{i_{n}^{(l)}}$
has the form (see Figure \ref{fig:The-tree-structure})

\begin{align}
 & \langle v_{(n-1)^{(m)},n^{(l)}}^{p}|i_{n-1}^{(m)},i_{n}^{(l)}\rangle\Gamma_{\alpha_{n-2}^{(1\cdots k)};\alpha_{n-1}^{(m)}}^{i_{n-1}^{(m)},[n-1]}\Gamma_{\alpha_{n-1}^{(1\cdots k)};\alpha_{n}^{(l)}}^{i_{n}^{(l)},[n]}=0.
\end{align}
 It thus follows from our choice of projectors that $\Gamma_{\alpha_{n-1}^{(1\cdots k)};\alpha_{n}^{(l)}}^{i_{n}^{(l)},[n]}$
is unconstrained when $i_{n}^{(l)}\leq\frac{d}{2}$. Since $s\leq D_{n}$
, we may choose $\Gamma_{\alpha_{n-1}^{(1\cdots k)};\alpha_{n}^{(l)}}^{i_{n}^{(l)}}$
such that $W$ has the maximum possible rank $s$.

\end{proof}
Given Lemma \ref{columnranklemma-1} we can reduce $W$ to row echelon
form using column operations. The process of Gaussian elimination
would not change the rank of $\Gamma_{\alpha_{n-1}^{(1\cdots k)};\alpha_{n}^{(l)}}^{i_{n}^{(l)},[n]}$.
This process will produce a new set of rows with $s$ pivots. Let
$S$ be the set of indices that index rows of $W$. Equivalently,
the Gaussian elimination produces a new set of $\Gamma_{\alpha_{n-1}^{(1\cdots k)};\alpha_{n}^{(l)}}^{i_{n}^{(l)},[n]}$
such that for every row indexed by $(i,\alpha)\in S$ there exists
a value of $\beta$ such that

\begin{eqnarray}
\Gamma_{\alpha_{n-1}^{(1\cdots k)};\beta}^{i_{n}^{(l)},[n]} & = & 1\text{ if }(i_{n}^{(l)},\alpha_{n-1}^{(1\cdots k)})=(i,\alpha)\label{Eq:gammadiagonalization_}\\
\Gamma_{\alpha_{n-1}^{(1\cdots k)};\beta}^{i_{n}^{(l)},[n]} & = & 0\text{ otherwise}\nonumber 
\end{eqnarray}

\bigskip{}

In order to prove that the constraint matrix $C$ is full rank we
need to prove that

\begin{eqnarray}
 &  & \sum_{p,l,\alpha_{n-1}^{(1\cdots k)},\beta_{n}^{(1)},\cdots,\widehat{\beta_{n}^{(l)}},\cdots,\beta_{n}^{(k)}}y_{p,l,\alpha_{n-1}^{(1\cdots k)},\beta_{n}^{(1)},\cdots,\widehat{\beta_{n}^{(l)}},\cdots,\beta_{n}^{(k)}}C_{p,l,\alpha_{n-1}^{(1\cdots k)},\beta_{n}^{(1)},\cdots,\widehat{\beta_{n}^{(l)}},\cdots,\beta_{n}^{(k)};i_{n+1}^{(h)},\alpha_{n}^{(1\cdots k)}}\\
 & = & \sum_{p,l,\alpha_{n-1}^{(1\cdots k)},\beta_{n}^{(1)},\cdots,\widehat{\beta_{n}^{(l)}},\cdots,\beta_{n}^{(k)}}y_{p,l,\alpha_{n-1}^{(1\cdots k)},\beta_{n}^{(1)},\cdots,\widehat{\beta_{n}^{(l)}},\cdots,\beta_{n}^{(k)}}\prod_{j=1,\cdots,\hat{l,}\cdots,k}\delta\left(\beta_{n}^{j}=\alpha_{n}^{j}\right)\left(\sum_{i_{n}^{(l)}}\langle v_{n^{(l)},(n+1)^{(h)}}^{p}|i_{n}^{(l)},i_{n+1}^{(h)}\rangle\Gamma_{\alpha_{n-1}^{(1\cdots k)};\alpha_{n}^{(l)}}^{i_{n}^{(l)},[n]}\right)\nonumber \\
 & = & 0\mbox{ }\forall i_{n+1}^{(h)},\alpha_{n}^{(1\cdots k)}\nonumber \\
 & \Leftrightarrow & y_{p,l,\alpha_{n-1}^{(1\cdots k)},\beta_{n}^{(1)},\cdots,\widehat{\beta_{n}^{(l)}},\cdots,\beta_{n}^{(k)}},\mbox{ }\forall p,l,\alpha_{n-1}^{(1\cdots k)},\beta_{n}^{(1)},\cdots,\widehat{\beta_{n}^{(l)}},\cdots,\beta_{n}^{(k)}.\nonumber 
\end{eqnarray}

We prove this by showing it for a specific choice of the projectors. We assign the projectors as follows:

\begin{eqnarray}
\langle v_{n^{(l)},(n+1)^{(h)}}^{p}|i_{n}^{(l)},i_{n+1}^{(h)}\rangle= \left\{
\begin{array}{c l}      
    1 & \text{ if } i_{n+1}^{(h)} =  \left ( \left \lceil \frac{2r}{d} \right \rceil +1 \right )(l-1) + \left \lfloor \frac{2p}{d} \right \rfloor \text{ and } i_{n}^{(l)} = p - \left \lfloor \frac{2p}{d} \right \rfloor  \frac{d}{2} \\ 
    0 & \text{ otherwise } .    
\end{array}\right. 
\label{projector_assignment}
\end{eqnarray}

Thus, the projectors are orthogonal basis vectors on the $d^{2}$ dimensional space in the computational basis. Note that this assignment obeys the following threes properties:

1)  For every projector we have

\[
\langle v_{n^{(l)},(n+1)^{(h)}}^{p}|i_{n}^{(l)},i_{n+1}^{(h)}\rangle=0
\]
when $i_{n}^{(l)}>\frac{d}{2}$.  Indeed $p - \left \lfloor \frac{2p}{d} \right \rfloor  \frac{d}{2}  \leq \frac{d}{2}$  is true for all $p$ because it is  the remainder of $\frac{p}{d/2}$.

2)  Furthermore, for each fixed value of $i_{n+1}^{(h)}$,  

\[
\langle v_{n^{(l)},(n+1)^{(h)}}^{p}|i_{n}^{(l)},i_{n+1}^{(h)}\rangle=1
\]
for at most one value of $l$ (but possibly multiple values of $i_{n}^{(l)}$,
and $p$).

3)   Finally, for each $l,p$

\[
\langle v_{n^{(l)},(n+1)^{(h)}}^{p}|i_{n}^{(l)},i_{n+1}^{(h)}\rangle=1
\]
 for exactly one fixed tuple of values $i_{n}^{(l)},i_{n+1}^{(h)}$.

It follows from Eq. (\ref{projector_assignment}) that each projector has at most one non-zero entry. To prove that each projector has {\it exactly} one non-zero entry it remains to verify the third requirement.  We must show that all vectors created this way are non-zero.  This is true iff $\left ( \left \lceil \frac{2r}{d} \right \rceil +1 \right )(l-1) + \left \lfloor \frac{2p}{d} \right \rfloor \leq d$ for all $l$ and $p$.  Since this expression is an increasing function of $l$ and $p$, it is sufficient to show  $\left ( \left \lceil \frac{2r}{d} \right \rceil +1 \right )(k-1) + \left \lfloor \frac{2r}{d} \right \rfloor \leq d$.  To prove this we assume $k \leq \frac{d}{3}$, and $d > 2$.  For $d = 2$ we have $rk \leq \frac{d^2}{4} = 1$, so $r = k = 1$, and the inequality $\left ( \left \lceil \frac{2r}{d} \right \rceil +1 \right )(k-1) + \left \lfloor \frac{2r}{d} \right \rfloor =1\leq d=2$.

We write  $r = x \frac{d}{2} + a$, where $x = \left \lfloor \frac{2 r}{d} \right \rfloor$ and the remainder $0\le a< \frac{d}{2}$,

\begin{align*}
rk = (x \frac{d}{2} +a) k \leq \frac{d^2}{4}
\end{align*}

So, 

\begin{align*}
k \leq \frac{d^2}{4 (x \frac{d}{2} +a)} < \frac{d^2 }{2 x d} = \frac{d}{2 x }
\end{align*}

and thus 

\begin{align*}
\left(  \left \lceil \frac{2r}{d} \right \rceil +1  \right)(k-1) + \left \lfloor \frac{2 r}{d} \right \rfloor  \le (x+2)(k-1) + x = xk + 2(k-1) < x \frac{d}{2x } +2( \frac{ d}{2 x } - 1) = \frac{d}{2} (1 + \frac{2}{x}) - 2
\end{align*}

In the case $x \geq 2$ this gives 

\begin{align*}
\left(  \left \lceil \frac{2r}{d} \right \rceil +1  \right)(k-1) + \left \lfloor \frac{2 r}{d} \right \rfloor    <  \frac{d}{2} (1 + \frac{2}{x}) - 2 \le d-2
\end{align*}

In the case $x < 2$ 

\begin{align*}
\left(  \left \lceil \frac{2r}{d} \right \rceil +1  \right)(k-1) + \left \lfloor \frac{2 r}{d} \right \rfloor   \leq  \left (  \left \lceil \frac{2r}{d} \right \rceil  +1 \right) k \le (x+2) k \leq 3 k \leq d .
\end{align*}

This proves the third assertion.

Now we may suppose that we have performed the Gaussian elimination
described above where the set $S$ is the set $S\equiv\{0,\cdots,\frac{d}{2}\}\times\{0,\cdots.,D_{n-1}\}^{k}$.
Note that $|S|=\frac{d}{2}D_{n-1}^{k}\leq\gamma_{n}D_{n-1}^{k}=D_{n}$
as follows from the work in Section \ref{sec:Recursion-Analysis},
and the fact that $rk\leq\frac{d^{2}}{2}$. This allows us to apply
Lemma \ref{columnranklemma-1} and use Gaussian elimination.

We will therefore assume that $\Gamma$'s have the form described
in \eqref{Eq:gammadiagonalization_} for all $i_{n}^{(l)}\leq\frac{d}{2}$.
Now let us imagine that there are real numbers $y_{p,l,\alpha_{n-1}^{(1\cdots k)},\beta_{n}^{j\neq l}}$
such that

\begin{align}
& \sum_{p,l,\alpha_{n-1}^{(1\cdots k)},\beta_{n}^{(1)},\cdots,\widehat{\beta_{n}^{(l)}},\cdots,\beta_{n}^{(k)}}y_{p,l,\alpha_{n-1}^{(1\cdots k)},\beta_{n}^{(1)},\cdots,\widehat{\beta_{n}^{(l)}},\cdots,\beta_{n}^{(k)}}C_{p,l,\alpha_{n-1}^{(1\cdots k)},\beta_{n}^{(1)},\cdots,\widehat{\beta_{n}^{(l)}},\cdots,\beta_{n}^{(k)};i_{n+1}^{(h)},\alpha_{n}^{(1\cdots k)}}  =\label{eq:linearindependence}\\
 & \sum_{p,l,\alpha_{n-1}^{(1\cdots k)},\beta_{n}^{(1)},\cdots,\widehat{\beta_{n}^{(l)}},\cdots,\beta_{n}^{(k)}}y_{p,l,\alpha_{n-1}^{(1\cdots k)},\beta_{n}^{(1)},\cdots,\widehat{\beta_{n}^{(l)}},\cdots,\beta_{n}^{(k)}}\prod_{j=1,\cdots,\hat{l,}\cdots,k}\delta\left(\beta_{n}^{j}=\alpha_{n}^{j}\right)\left(\sum_{i_{n}^{(l)}}\langle v_{n^{(l)},(n+1)^{(h)}}^{p}|i_{n}^{(l)},i_{n+1}^{(h)}\rangle\Gamma_{\alpha_{n-1}^{(1\cdots k)};\alpha_{n}^{(l)}}^{i_{n}^{(l)},[n]}\right) = 0\nonumber \\
& \\
& \forall i_{n+1}^{(h)},\alpha_{n}^{(1\cdots k)}  .\nonumber 
\end{align}

Take $p',l',\gamma^{(1\cdots k)},\tau^{(j\neq l)}$ to be any fixed
set of values for $p,l,\alpha_{n-1}^{(1\cdots k)},\beta_{n}^{j\neq l}$.
We now prove that $y_{p',l',\gamma^{(1\cdots k)},\tau^{(j\neq l)}}=0$,
thereby completing the proof that $C$ has full row rank.

From above

\[
\langle v_{n^{(l')},(n+1)^{(h)}}^{p'}|i_{n}^{(l')},i_{n+1}^{(h)}\rangle=1
\]
 for exactly one value of $(i_{n}^{(l')},i_{n+1}^{(h)})$, which we
denote by $(i',j')$ (and that it is zero elsewhere). Furthermore,
we know that there is a value of $\beta$ such that

\begin{align}
\Gamma_{\alpha_{n-1}^{(1\cdots k)};\beta}^{i_{n}^{(l)},[n]} & =1\text{ if }(i_{n}^{(l)},\alpha_{n-1}^{(1\cdots k)})=(i',\gamma^{(1\cdots k)})\label{gammadiagonalization}\\
\Gamma_{\alpha_{n-1}^{(1\cdots k)};\beta}^{i_{n}^{(l)},[n]} & =0\text{ otherwise}.\nonumber 
\end{align}

Now, evaluating the Eq. \eqref{eq:linearindependence} at $i_{n+1}^{(h)}=j'$,
and where $\alpha_{n}^{(1\cdots k)}$ is specified by $\alpha_{n}^{j\neq l'}=\tau^{(j\neq l')}$
and $\tau^{(j\neq l')}=\beta$  the constraints collapse to

\[
y_{p',l',\gamma^{(1\cdots k)},\tau^{(j\neq l)}}\langle v_{n^{(l')},(n+1)^{(h)}}^{p'}|i',j'\rangle\Gamma_{\gamma^{(1\cdots k)};\beta}^{i',[n]}=y_{p',l',\gamma^{(1\cdots k)},\tau^{(j\neq l)}}=0.
\]
And so we are done.

\section{Acknowledgements}
We thank Peter W. Shor, Jeffrey Goldstone and Daniel Nagaj for discussions. MC acknowledges the support of NSF IGERT program Interdisciplinary Quantum Information Science and Engineering (iQuISE) through award number 0801525.   RM acknowledges the support of National Science Foundation through grant number CCF-0829421.

\end{document}